\DeclareMathOperator{\forbidden}{forbidden}
\begin{document}
\title{A Lattice Linear Predicate Parallel Algorithm for the Dynamic Programming Problems}
%\thanks{
%Supported by NSF CNS-1812349, CNS-1563544, and the Cullen Trust for Higher Education Endowed Professorship}
%}
%%\author{Submission 41}
%%\title[Searching in Lattices for Optimal Pricing] {Searching in Lattices for Optimal Pricing}
\author{Vijay K. Garg
   \\
    Department of Electrical and Computer  Engineering\\
  University of Texas at Austin,\\
  Austin, TX 78712\\
  \texttt{garg@ece.utexas.edu} \\
  %% examples of more authors
  %% \AND
  %% Coauthor \\
  %% Affiliation \\
  %% Address \\
  %% \texttt{email} \\
  %% \And
  %% Coauthor \\
  %% Affiliation \\
  %% Address \\
  %% \texttt{email} \\
  %% \And
  %% Coauthor \\
  %% Affiliation \\
  %% Address \\
  %% \texttt{email} \\
}

%\usepackage{booktabs} % For formal tables
%
%\usepackage[ruled]{algorithm2e} % For algorithms
%\renewcommand{\algorithmcfname}{ALGORITHM}
%\SetAlFnt{\small}
%\SetAlCapFnt{\small}
%\SetAlCapNameFnt{\small}
%\SetAlCapHSkip{0pt}
%\IncMargin{-\parindent}

\newtheorem{theorem}{Theorem}
\newtheorem{lemma}{Lemma}
\newtheorem{corollary}{Corollary}
\newtheorem{e1}{Example}
\newtheorem{definition}{Definition}
\def\IEEEQED{\mbox{\rule[0pt]{1.3ex}{1.3ex}}} % for a filled box
\newcommand{\ep}{\hspace*{\fill}~\IEEEQED}
\newenvironment{proof}{{\bf Proof:}}{\ep}

\newcommand{\remove}[1]{}
\newcommand{\h}{\hspace*{0.2in}}
\newcommand{\Ra}{\Rightarrow}
\newcommand{\ra}{\rightarrow}
\newcommand{\CC}{{L}}
\newcommand{\CB}{{L_B}}
\newcommand{\C}{G}
\newcommand{\RR}{\mathbf{R}}

%\DeclareMathOperator{mpref}{mpref}
%\newcommand{mpref}{mpref}
%\DeclareMathOperator{wpref}{wpref}

%\usepackage{microtype}%if unwanted, comment out or use option "draft"

%\graphicspath{{./figs/}}%helpful if your graphic files are in another directory

%\begin{document}

% Title portion. Note the short title for running heads 
%\title[Constrained Optimization Problems]{Applying Predicate Detection to the Constrained Optimization Problems}
%\title{Applying Predicate Detection to the Constrained Optimization Problems
%\title{Improved Paths to Stability for the Stable Marriage Problem
%\thanks{
%Supported by NSF CNS-1812349, CNS-1563544, and the Cullen Trust for Higher Education Endowed Professorship}
%}
%%\author{Submission 41}
%%\title[Searching in Lattices for Optimal Pricing] {Searching in Lattices for Optimal Pricing}
%\author{Vijay K. Garg,\\
%  The University of Texas at Austin,\\
%  Department of Electrical and Computer Engineering,\\
%  Austin, TX 78712, USA}
%\author{Changyong Hu,\\
%  The University of Texas at Austin,\\
%  Department of Electrical and Computer Engineering,\\
%  Austin, TX 78712, USA}

 % Bibliography
%\bibliographystyle{ACM-Reference-Format}
%\bibliographystyle{ACM-Reference-Format}
\bibliographystyle{unsrt}
%\bibliographystyle{plainurl}% the recommended bibstyle
%\bibliography{refs}
%\setcitestyle{authoryear}

%\keywords{Stable Matching \and lattice-linear Predicates \and Distributive Lattices}
%\keywords{}% mandatory: Please provide 1-5 keywords
% Author macros::end %%%%%%%%%%%%%%%%%%%%%%%%%%%%%%%%%%%%%%%%%%%%%%%%%

\maketitle

\begin{abstract}
It has been shown that the parallel Lattice Linear Predicate (LLP) algorithm solves many combinatorial optimization problems such as the shortest path problem, the stable marriage problem and the market clearing price problem.
In this paper, we give the parallel LLP algorithm for many dynamic programming problems. In particular, we show that the LLP algorithm solves the longest subsequence problem, the optimal binary search tree problem, and the knapsack problem. 
%It solves a more general version of these probl
Furthermore, the algorithm can be used to solve the constrained versions of these problems so long as the constraints are lattice linear. The parallel LLP algorithm requires only read-write atomicity and no higher-level atomic instructions. 
 \end{abstract}

\section{Introduction}

It has been shown that the  Lattice Linear Predicate (LLP) algorithm solves many combinatorial optimization problems such as the shortest path problem, the stable marriage problem and the market clearing price problem \cite{DBLP:conf/spaa/Garg20}. 
In this paper, we show that many problems that can be solved using dynamic programming \cite{bellman1952theory} can also be solved in {\em parallel} using the LLP algorithm.
Dynamic programming is applicable to problems where it is easy to set up a recurrence relation
such that the solution of the problem can be derived from the solutions of problems with smaller sizes.
One can solve the problem using recursion; however, recursion may result in many duplicate computations.
By using memoization, we can avoid recomputing previously computed values. We assume that the problem is solved using dynamic programming with
such bottom-up approach in this paper. 
%Using LLP algorithm for these problems have two advantages.
%First, LLP is a parallel algorithm and we generally get an algorithm with reduced parallel time complexity.
%Secondly, we can very easily construct a solution to the problem that puts additional constraints to the solution.
%So long as the constraints put on the problem are lattice-linear, it is easy to find a solution to the problem.

The LLP algorithm views solving a problem as searching for an element in a finite distributive lattice  \cite{Birk3, davey, Gar:2015:bk} that satisfies a given predicate $B$. The predicate
is required to be closed under the operation of meet (or, equivalently lattice-linear, defined in Section \ref{sec:back}). For all the problems considered in the paper, 
the longest subsequence problem, the optimal binary search tree problem and the Knapsack problem, the predicate is indeed closed under meets.
Any finite distributive lattice can be equivalently characterized by a finite poset of its join-irreducibles from Birkhoff's theorem  \cite{Birk3, davey}. 
The LLP algorithm works on the finite poset in parallel to find the least element in the distributive lattice that satisfies the given predicate.
It starts with the bottom element of the lattice and marches towards the top element of the lattice in a {\em parallel} fashion by advancing
on any chain of the poset for which the current element is {\em forbidden}. 

There are also some key differences between dynamic programming (the bottom-up approach) and the LLP algorithm. The usual dynamic programming problem seeks a structure that
minimizes (or maximizes) some scalar. For example, the longest subsequence problem asks for the subsequence in an array $A[1..n]$ that maximizes the sum.
In contrast, the LLP algorithm seeks to minimize or maximize 
a {\em vector}. In the longest subsequence problem with the LLP approach, we are interested in the longest subsequence in the array $A[1..i]$ for each $i \leq n$ that ends at index $i$. Thus, instead of 
asking for a scalar, we ask for the vector of size $n$. 
We get an array $G[1..n]$ and the solution to the original problem is just the maximum value in the array $G$.
Similarly, the optimal binary search tree problem \cite{knuth1971optimum} asks for the construction of an optimal binary search tree 
on  $n$ symbols such that each symbol $i$ has probability $p_i$ of being searched. Our goal is to find the binary search tree
that minimizes the expected cost of search in the tree. 
The LLP problem seeks the optimal binary search
tree for all ranges $i \ldots j$ instead of just one range $1..n$. 
Finally, the knapsack problem \cite{horowitz1974computing,ibarra1975fast}
asks for the maximum valued subset of items that can be fit in a knapsack such that the profit is maximized and the total weight of the knapsack is at most $W$. The LLP problem seeks the maximum profit obtained by
choosing items from $1..i$ and the total weight from $1..W$.
In all these problems, traditionally we are seeking a single structure that optimizes a single scalar; whereas the LLP algorithm asks for a vector.
It turns out that that in asking for an optimal {\em vector} instead of an optimal {\em scalar}, we do not lose much since the existing solutions also
 end up finding the optimal solutions for the subproblems. The LLP algorithm returns a vector $G$ such that $G[i]$ is optimal for $i$.
 
 The second difference between dynamic programming and the LLP algorithm is in terms of parallelism. 
 The dynamic programming solution does not explicitly refers to parallelism in the problem. The LLP algorithm
 has an explicit notion of parallelism. The solution uses an array $G$ for all problems and the algorithm requires the components of $G$ to be advanced
 whenever they are found to be {\em forbidden}. If $G[i]$ is forbidden for multiple values of $i$, then $G[i]$ can be advanced for all those values in parallel.
 
 The third difference between dynamic programming and the LLP algorithm is in terms of synchronization required during parallel execution of the algorithm. In case of dynamic programming, if the recursive formulas are evaluated in parallel it is assumed that the values used are correct. In contrast, 
 suppose that we check for $G[i]$ and $G[j]$ to be forbidden concurrently such that $G[i]$ ends up using an old value of
 $G[j]$, the LLP algorithm is still correct. 
 The only requirement we have for parallelism is that when $G[i]$ uses a value of $G[j]$, it should either be the most recent value
 of $G[j]$ or some prior value. A processor that is responsible for keeping $G[i]$ may get old value from $G[j]$ in a parallel setting when it gets this value from a cache. In a message passing system, it may get the old value of $G[j]$ if the message to update $G[j]$ has not yet arrived at the processor with $G[i]$. Thus, LLP algorithms are naturally parallel with little synchronization overhead.
 
 The fourth difference between dynamic programming and the LLP algorithm is that we can use the LLP algorithm to solve a constrained version
 of the problem, so long as the constraint itself is lattice-linear. Suppose that we are interested in the longest subsequence such that successive elements differ by at least $2$.
 It can be (easily) shown that this constraint is lattice-linear. Hence, the LLP algorithm is applicable because we are searching for an element that satisfies a conjunction 
 of two lattice-linear predicates. Since the set of lattice-linear predicates is closed under conjunction, the resulting predicate is also lattice-linear and
 the LLP algorithm is applicable. Similarly, the predicate that the symbol $i$ is not a parent of symbol $j$ is lattice-linear and the constrained optimal binary search tree
 algorithm returns the optimal tree that satisfies the given constraint. In the Knapsack problem, it is easy to solve the problem with the additional constraint that if 
 the item $x$ is included in the Knapsack, then the item $y$ is also included.
 %For the optimal binary search tree

We note here that our goal is not to improve the time or work complexity of the algorithms, but to provide a single parallel algorithm that solve all of these problems and their constrained versions. Furthermore, the parallel algorithm we propose has no synchronization overhead, i.e., they only require read-write atomicity.
 
This paper is organized as follows. Section \ref{sec:back} gives background on the LLP method.
Section \ref{sec:prog} gives the programming notation used to express LLP algorithms in the paper.
%The reader is referred to 
Section \ref{sec:longest} applies the LLP method to the longest subsequence problem.
Section \ref{sec:optimalBST} give a parallel  algorithm for the optimal binary search tree construction problem.
Section \ref{sec:knapsack}  gives an LLP algorithm for the knapsack problem. 
% The knapsack problem is pseudo-polynomial
% and by defining an appropriate lattice homomorphism from a lattice to a smaller lattice, we get an approximation scheme.

\section{Background}
\label{sec:back} 
In this section, we cover the background information on the LLP Algorithm \cite{DBLP:conf/spaa/Garg20}.
Let $\CC$ be the lattice of all $n$-dimensional vectors of reals greater than or equal to zero vector and less than or equal to a given vector $T$
where the order on the vectors is defined by the component-wise natural $\leq$.
The lattice is used to model the search space of the combinatorial optimization problem.
The combinatorial optimization problem is modeled as finding the minimum element in $\CC$ that satisfies a boolean {\em predicate} $B$, where
$B$ models {\em feasible} (or acceptable solutions).
We are interested in parallel algorithms to solve the combinatorial optimization problem with $n$ processes.
We will assume that the systems maintains as its state the current candidate vector $\C \in \CC$ in the search lattice, 
where $\C[i]$ is maintained at process $i$. We call $\C$, the global state, and $\C[i]$, the state of process $i$.

Fig. \ref{fig:poset-lattice} shows a finite poset corresponding to $n$ processes ($n$ equals two in the figure), and the corresponding lattice of all eleven global states.

\begin{figure}[htbp]
\begin{center}
\includegraphics[width=2.5in,height=1.0in]{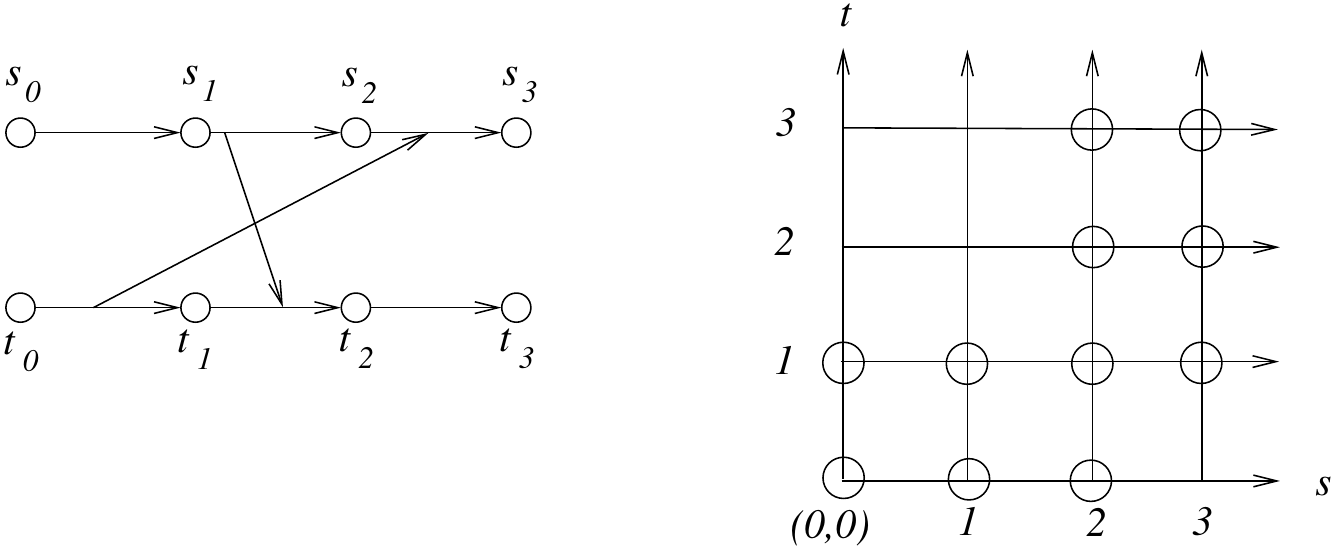}
\caption{A poset and its corresponding distributive lattice $L$ \label{fig:poset-lattice}}
\end{center}
\end{figure}

Finding an element in lattice that satisfies the given predicate $B$, is called the {\em predicate detection} problem.
Finding the {\em minimum} element that satisfies $B$ (whenever it exists) is the combinatorial optimization problem.
A key concept in deriving an efficient  predicate detection algorithm is that of a {\em
forbidden} state.  
Given a predicate $B$, and a vector $\C \in \CC$, a state $G[j]$ is {\em forbidden} (or equivalently, the index $j$ is forbidden) if 
for any vector $H \in \CC$ , where $G \leq H$, if $H[j]$ equals $\C[j]$, then $B$ is false for $H$.
Formally,
\begin{definition}[Forbidden State  \cite{chase1998detection}]
  Given any distributive lattice $\CC$ of  $n$-dimensional vectors of $\RR_{\ge 0}$, and a predicate $B$, we define
  $ \forbidden(G,j,B) \equiv \forall H \in \CC : G \leq H : (G[j] = H[j]) \Rightarrow
  \neg B(H).$
\end{definition}

We define a predicate $B$ to
be {\em lattice-linear} with respect to a lattice $\CC$
 if for any global state $G$,  $B$ is false in $G$ implies that $G$ contains a
{\em forbidden state}. Formally,
\begin{definition}[lattice-linear Predicate  \cite{chase1998detection}]
A boolean predicate $B$ is {\em {lattice-linear}} with respect to a lattice $\CC$
iff
%$\forall G \in \CC: \neg B(G) \Ra \exists i: \mbox{forbidden}(G,i,B)$.
$\forall G \in \CC: \neg B(G) \Ra (\exists j: \forbidden(G,j,B))$.
\end{definition}

Once we determine $j$ such that $forbidden(G,j,B)$, 
we also need to determine how to advance along index $j$.
To that end, we extend the definition of forbidden as follows.
\begin{definition}[$\alpha$-forbidden]
 Let $B$ be any boolean predicate on the lattice $\CC$ of all assignment vectors.
 For any $\C$, $j$ and positive real $\alpha > \C[j]$, we define $\mbox{forbidden}(\C,j, B, \alpha)$ iff
 $$  \forall H \in \CC:H \geq \C: (H[j] < \alpha) \Ra \neg B(H).  $$
\end{definition}

Given any lattice-linear predicate $B$, suppose $\neg B(\C)$. This means that $\C$ must be advanced on all
indices $j$ such that $\forbidden(\C,j,B)$.  We use a function $\alpha(\C,j, B)$ such that $\forbidden(\C, j, B, \alpha(\C,j, B))$ holds
whenever $\forbidden(\C,j,B)$ is true.  With the notion of $\alpha(\C, j, B)$, we have the Algorithm $LLP$.
%shown in Fig. \ref{fig:alg-llp}.
The algorithm $LLP$ has two inputs --- the predicate $B$ and the top element of the lattice $T$. It returns the least vector $\C$ which is less than or equal to $T$
and satisfies $B$ (if it exists). Whenever $B$ is not true in the current vector $\C$, the algorithm advances on all forbidden indices $j$
in parallel. This simple parallel algorithm can be used to solve a large variety of combinatorial optimization problems
by instantiating different $\forbidden(\C,j,B)$ and $\alpha(\C,j,B)$.

\begin{algorithm}
\SetAlgoRefName{LLP}
 vector {\bf function} getLeastFeasible($T$: vector, $B$: predicate)\\
%// Assumes that all vertices reachable from $0$;\\
 {\bf var} $\C$: vector of reals initially $\forall i: \C[i] = 0$;\\
 \h {\bf while} $\exists j: \forbidden(\C,j,B)$ {\bf do}\\
%  \h      \h {\bf while} $\exists j: \exists i\in pre(j): \C[i] + w[i,j] \leq \C[j]$ {\bf do}\\
    \h\h    {\bf for all} $j$ such that $\forbidden(\C,j,B)$  {\bf in parallel}:\\
    \h\h\h {\bf if} $(\alpha(\C,j,B) > T[j])$ then return null; \\
    %// no feasible vector\\
% \h      \h\h    for all $j$ such that $\exists i\in pre(j): \C[i] + w[i,j] \leq \C[j]$;\\
 \h   \h\h    {\bf else} $\C[j] := \alpha(\C,j,B)$;\\
      \h {\bf endwhile};\\
    \h {\bf return} $\C$; // the optimal solution
%    \end{tabbing}
%\end{minipage}
%  }
%  }
\caption{To find the minimum vector at most $T$ that satisfies $B$\label{fig:alg-llp}}
\end{algorithm}
%\end{figure}

The following Lemma is useful in proving lattice-linearity of predicates.
\begin{lemma}\label{lem:basic-LLP}  \cite{DBLP:conf/spaa/Garg20,chase1998detection}
Let $B$ be any boolean predicate defined on a lattice $\CC$ of vectors. \\
(a) Let $f:\CC \ra \RR_{\ge 0}$ be any monotone function  defined on the lattice $\CC$ of vectors of $\RR_{\ge 0}$.
Consider the predicate
$B \equiv \C[i] \geq f(\C)$ for some fixed $i$. Then, $B$ is lattice-linear.\\
(b) If $B_1$ and $B_2$ are lattice-linear then $B_1 \wedge B_2$ is also lattice-linear.
\end{lemma}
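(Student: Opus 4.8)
The plan is to verify lattice-linearity directly from the definition in each case, by exhibiting a forbidden index whenever the predicate fails. For part (a), suppose $\neg B(G)$, which by definition means $G[i] < f(G)$. I would claim that the index $i$ itself is forbidden, i.e.\ that $\forbidden(G,i,B)$ holds. To see this, take any $H \in \CC$ with $G \leq H$ and $H[i] = G[i]$. Since $f$ is monotone and $G \leq H$, we have $f(G) \leq f(H)$, and therefore $H[i] = G[i] < f(G) \leq f(H)$, so $\neg B(H)$. This establishes $\forbidden(G,i,B)$, and hence $B$ is lattice-linear. The entire content of the argument is the monotonicity hypothesis: it guarantees that the lower bound $f(G)$ required of component $i$ can only increase as we move up the lattice, so freezing $G[i]$ at its current value keeps $B$ false for every $H$ above $G$.

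For part (b), suppose $\neg(B_1 \wedge B_2)(G)$. Then at least one conjunct fails at $G$; without loss of generality assume $\neg B_1(G)$. By lattice-linearity of $B_1$ there is an index $j$ with $\forbidden(G,j,B_1)$, and I would show that this same $j$ is forbidden for the conjunction. Indeed, for any $H \in \CC$ with $G \leq H$ and $H[j] = G[j]$, the definition of $\forbidden(G,j,B_1)$ gives $\neg B_1(H)$, and $\neg B_1(H)$ immediately implies $\neg\big(B_1(H) \wedge B_2(H)\big)$. Hence $\forbidden(G,j,B_1 \wedge B_2)$ holds, so $B_1 \wedge B_2$ is lattice-linear. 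The case $\neg B_2(G)$ is symmetric.

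Neither step is technically difficult; both are direct unwindings of the definitions of $\forbidden$ and of lattice-linearity. The only place requiring care --- and the closest thing to an obstacle --- is respecting the quantifier structure of $\forbidden$: one must verify the implication for \emph{every} $H$ above $G$ that agrees with $G$ in the relevant coordinate, not merely for a single witness. In part (a) this universal quantifier is precisely what forces the appeal to monotonicity, and in part (b) it is handled uniformly by the elementary logical fact that $\neg B_1$ entails $\neg(B_1 \wedge B_2)$ at every such $H$.
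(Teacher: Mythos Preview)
Your proof is correct and is exactly the standard argument for both parts. Note, however, that the paper does not actually include its own proof of this lemma: it is stated with citations to prior work and used as a black box. Your write-up matches the argument one finds in those references, so there is nothing to compare against within the paper itself.
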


We now give an example of lattice-linear predicates for scheduling of $n$ jobs. Each job $j$ requires time $t_j$ for completion and has a set of
 prerequisite jobs, denoted by $pre(j)$, such that it can be started only after all its prerequisite jobs
have been completed. Our goal is to find the minimum completion time for each job.
We let our lattice $\CC$ be the set of all possible completion times. A completion vector $\C \in \CC$ is feasible iff $B_{jobs}(\C)$ holds where
$B_{jobs}(\C) \equiv \forall j: (\C[j] \geq t_j) \wedge (\forall i \in pre(j): \C[j] \geq \C[i] + t_j)$.
$B_{jobs}$ is lattice-linear because if it is false, then there exists $j$ such that 
either $\C[j] < t_j$ or $\exists i \in pre(j): \C[j] < \C[i]+t_j$. We claim that $\forbidden(\C, j, B_{jobs})$. Indeed, any vector $H \geq \C$ cannot
be feasible with $\C[j]$ equal to $H[j]$. The minimum of all vectors that satisfy feasibility corresponds to the minimum completion time.

As an example of a predicate that is not lattice-linear, consider the predicate $B \equiv \sum_j \C[j] \geq 1$ defined on the space of 
two dimensional vectors. Consider the vector $\C$ equal to $(0,0)$. The vector $\C$ does not satisfy $B$. For $B$ to be lattice-linear
either the first index or the second index should be forbidden. However, 
none of the indices are
forbidden in $(0,0)$. The index $0$ is not
forbidden because the vector $H = (0,1)$ is greater than $\C$, has $H[0]$ equal to $\C[0]$ but it still satisfies $B$.
The index $1$ is also not forbidden
because $H =(1,0)$ is greater than $\C$, has $H[1]$ equal to $\C[1]$ but it satisfies $B$.

 \section{Notation}
 \label{sec:prog}
 
 We now go over the notation used in description of our parallel algorithms.
Fig. \ref{fig:examples} shows a parallel algorithm for the job-scheduling problems.
% There are three sections of the program. 

The {\bf var} section gives the variables of the problem.
We have a single variable $G$ in the example shown in Fig. \ref{fig:examples}. 
% All other variables are derived directly or indirectly from $G$.
 %For simplicity, we have assumed that 
 $G$ is an array of objects such that
 $G[j]$ is the state of thread $j$ for a parallel program.

The {\bf input} section gives all the inputs to the problem. These inputs are constant in the program and do not change during execution.

 The {\bf init} section is used to initialize the state of the program.
 All the parts of the program
 are applicable to all values of $j$. For example, the {\em init} section of the job scheduling program in Fig. \ref{fig:examples}
 specifies that $G[j]$ is initially $t[j]$. Every thread $j$ would initialize $G[j]$.

 The {\bf always} section defines additional variables which are derived from $G$.
 The actual implementation of these variables are left to the system. They can be viewed as
 macros. We will show its use later.
% For example, in the stable marriage problem, for any thread  $z = mpref[j][G[j]]$. 
% This means that
% whenever $G[j]$ changes, so does $z$ (just like a formula in a spreadsheet).
 %All our examples in this paper require only $O(1)$ computation to compute the value of these additional variables.
 %This section can be viewed as 

 The LLP algorithm gives the desirable predicate either by using the {\bf forbidden} predicate or {\bf ensure} predicate.
 The {\em forbidden} predicate has an associated {\em advance} clause that specifies how $G[j]$ must be advanced
 whenever the forbidden predicate is true.
 For many problems, it is more convenient to use the complement of the forbidden predicate.
 The {\em ensure} section specifies the desirable predicates of the form $(G[j] \geq expr)$ or 
 $(G[j] \leq expr)$. 
 %It simply means that whenever the 
 %The ensure statement can be with or without the action statement specified by the clause {\bf by}.
 %If the action is not provided, then the predicate must simply be either $(G[j] \geq expr)$ or 
 %$(G[j] \leq expr)$. 
 The statement {\em ensure} $G[j] \geq expr$  simply means that whenever thread $j$ finds $G[j]$ to be less than 
 $expr$; it can advance $G[j]$ to $expr$.
 %{\bf while} ($G[j] < expr$)
 %      $G[j] := expr$;
 % 
  Since $expr$ may refer to $G$, just by setting $G[j]$ equal to $expr$, there is no guarantee 
  that $G[j]$ continues to be  equal to $expr$ --- the value of $expr$ may change because of changes in other components.
  We use {\em ensure} statement whenever $expr$ is a monotonic function of $G$ and therefore the predicate
  is lattice-linear. 
% For some examples such as the stable marriage problem, the desirable predicate is not of the form  $(G[j] \geq expr)$  or
%  $(G[j] \leq expr)$. In this case, we specify the forbidden predicate and the corresponding advance statement.
%  The forbidden section gives the predicate under which thread $j$ is forbidden and the {\em advance} statement gives
%  the action that must be executed whenever the forbidden predicate is true.
%  %the programmer supplies the action by using {\bf by} clause.
%  For the stable marriage problem, the action is to increment $G[j]$ (i.e., proposing to the next woman in his list).

\begin{figure}
\begin{center}
\small {
\fbox{\begin{minipage}  {\textwidth}\sf
\begin{tabbing}
xxxx\=xxxx\=xxxx\=xxxx\=xxxx\=xxxx\= \kill
$P_j$: Code for thread $j$\\
// common declaration for all the programs below \\
{\bf var} $G$: array[$1$..$n$] of $0..maxint$;// shared among all threads\\
{\bf input}: $t[j]: int$, $pre(j)$: list of $1..n$;\\
{\bf init}: $G[j] := t[j]$;\\
%Only thread $j$ can write to $G[j]$;\\
\\
{\bf \underline{job-scheduling}}:\\
\> {\bf forbidden}: $G[j] < \max \{G[i] + t[j] ~|~ i \in pre(j)\}$;\\
\> \> {\bf advance}: $G[j] := \max \{G[i] + t[j] ~|~ i \in pre(j)\}$;\\
\\
{\bf \underline{job-scheduling}}:\\
\> {\bf ensure}: $G[j] \geq \max \{G[i] + t[j] ~|~ i \in pre(j)\}$;\\
\end{tabbing}
\end{minipage}
 }
 }
\caption{LLP Parallel Program for (a) job scheduling problem using forbidden predicate (b) job scheduling problem using ensure clause \label{fig:examples}}
\end{center}
\end{figure}

% !TEX root = ../stable-marriage/arxiv-housing/llp-dp.tex
\section{Longest Increasing Subsequences}
 \label{sec:longest}
 
We are given an integer array as input. For simplicity, we assume that all entries are distinct.
Our goal is find for each index $i$ the length of the longest increasing sequence
that ends at $i$.  For example, suppose the array $A$ is {\tt 
\{35 38 27 45 32\}}. 
Then, the desired output is {\tt 
\{1 2 1 3 2\}}. The corresponding longest increasing subsequences are: {\tt (35), (35, 38), (27), (35, 38, 45), (27, 32).}

We can define a graph $H$ with indices as vertices. For this example, we have five vertices numbered
$v_1$ to $v_5$.
We draw an edge from $v_i$ to $v_j$ if $i$ is less than $j$ and $A[i]$ is also less than $A[j]$.
This graph is clearly acyclic as an edge can only go from a lower index to a higher index.
We use $pre(j)$ to be the set of indices which have an incoming edge to $j$.
The length of the longest increasing subsequence ending at index $j$ is identical to the
length of the longest path ending at $j$. 
% We can solve this problem by mapping to the the
% job scheduling problem as follows. 
% We assume that each job takes one unit of time and a job can be started only when 
% all its predecessor jobs have finished. 
% Clearly, the earliest a job $i$ can be finished is equal to the
% length of the longest path that ends at $i$.

To solve the problem using LLP, we model it as a search for the smallest vector $G$ that
satisfies the constraint $B \equiv \forall j:G[j] \geq 1 \wedge \forall j: G[j] \geq \max \{G[i] + 1~|~ i \in pre(j) \}$.
To understand $B$, we first consider a stronger predicate $B_* = (G[1] = 1) \wedge \forall j: G[j]  = \max \{1, \max \{G[i] + 1~|~ i \in pre(j) \} \}$.
The interpretation of $G[j]$ in $B_*$ is that it is the length of the longest path that ends in $j$.
Thus, in the longest increasing subsequence problem we are searching for the vector that satisfies the predicate $B_*$.
Instead of searching for an element in the lattice that satisfies $B_*$, we search for the least element in the lattice that satisfies $B$.
This allows us to solve for the constrained version of the problem in which we are searching for an element that satisfies
an additional lattice-linear constraint.

The underlying lattice we consider is that of all vectors of natural numbers less than or equal to the maximum element
in the lattice. A vector in this lattice is {\em feasible} if it satisfies $B$. 
We first show that the constraint $B$ is lattice-linear.

\noindent
\begin{lemma}
The constraint $B \equiv (\forall j:G[j] \geq 1) \wedge (\forall j: G[j] \geq \max \{G[i] + 1~|~ i \in pre(j) \})$ is lattice-linear.
\end{lemma}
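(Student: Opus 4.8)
The plan is to decompose $B$ into a finite conjunction of elementary predicates, each of the form $G[j] \ge f(G)$ for a suitable monotone $f$, and then invoke Lemma~\ref{lem:basic-LLP}. Explicitly, I would write
$$B \equiv \bigwedge_{j} \big(G[j] \ge 1\big) \;\wedge\; \bigwedge_{j} \big(G[j] \ge f_j(G)\big), \qquad f_j(G) := \max \{G[i] + 1 ~|~ i \in pre(j)\}.$$
Since $B$ is thereby expressed as a conjunction of $2n$ predicates, part (b) of Lemma~\ref{lem:basic-LLP}, applied inductively, reduces the problem to showing that each individual conjunct is lattice-linear.

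For the conjuncts $G[j] \ge 1$, the relevant function is the constant map $f \equiv 1$, which is trivially monotone and takes values in $\RR_{\ge 0}$; hence each such conjunct is lattice-linear by part (a) of Lemma~\ref{lem:basic-LLP}. For the conjuncts $G[j] \ge f_j(G)$, the only thing to verify is that $f_j$ is a monotone function from $\CC$ into $\RR_{\ge 0}$. This is the heart of the argument: if $G \le H$ componentwise, then $G[i] \le H[i]$ for every $i \in pre(j)$, so $G[i]+1 \le H[i]+1$, and since the maximum of a fixed finite family of monotone functions is again monotone, $f_j(G) \le f_j(H)$. Part (a) of Lemma~\ref{lem:basic-LLP} then delivers lattice-linearity of each such conjunct.

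The one genuine subtlety I would address explicitly is the case $pre(j) = \emptyset$, where $f_j$ is a maximum over the empty index set. I would adopt the convention that this empty maximum equals $0$ (indeed any value at most $1$ will do), so that the conjunct $G[j] \ge f_j(G)$ is subsumed by $G[j] \ge 1$ and the monotonicity argument remains vacuously valid; this matches the role of the inner $\max\{1,\cdot\}$ appearing in the stronger predicate $B_*$. With this convention fixed, every conjunct is lattice-linear, and the repeated application of Lemma~\ref{lem:basic-LLP}(b) yields that $B$ is lattice-linear. I do not anticipate any serious obstacle beyond this bookkeeping: the decomposition is clean, and the monotonicity of a finite maximum of shifted coordinate projections is routine.
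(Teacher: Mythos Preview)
Your proposal is correct and follows essentially the same approach as the paper: decompose $B$ into conjuncts, apply Lemma~\ref{lem:basic-LLP}(b) to reduce to each conjunct, and then invoke Lemma~\ref{lem:basic-LLP}(a) using the monotonicity of the constant function $1$ and of $f_j(G) = \max\{G[i]+1 \mid i \in pre(j)\}$. Your version is in fact more careful than the paper's, since you explicitly verify the monotonicity of $f_j$ and handle the edge case $pre(j)=\emptyset$, both of which the paper leaves implicit.
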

\begin{proof}
Since the predicate $B$ is a conjunction of two predicates, it is sufficient to show that each of them is lattice-linear from
Lemma \ref{lem:basic-LLP}(b). The first conjunct is lattice linear because the constant function $1$ is monotone. The second conjunct
can be viewed as a conjunction over all $j$. For a fixed $j$, the predicate $G[j] \geq \max \{G[i] + 1~|~ i \in pre(j)$ is lattice-linear
from Lemma \ref{lem:basic-LLP}(a).
\end{proof}

Our goal is to find the smallest vector in the lattice that satisfies $B$.
Now, LLP algorithm can be formulated as \ref{fig:alg-longest}.

\begin{algorithm}
\SetAlgoRefName{LLP-Longest-Increasing-Subsequence}
$P_j$: Code for thread $j$\\
 {\bf input}: $A$:array of int;\\
  {\bf var} $G$: array[$1$ \ldots $n$] of int;\\
 {\bf init}: $G[j] = 1$;\\
 \h $pre(j) := \{ i \in 1..j-1 | A[i] < A[j] \}$;\\
 {\bf ensure}: $G[j] \geq \max \{G[i] + 1~|~ i \in pre(j)\}$;\\
 \caption{Finding the Longest Increasing Subsequence. \label{fig:alg-longest}}
\end{algorithm}

This algorithm starts with all values as $1$ and increases the $G$ vector till it satisfies the constraint
$G[j] \geq \max \{G[i] + 1~|~ i \in pre(j) \}$.
The above algorithm, although correct, does not preclude $G[j]$ from getting updated multiple times. 
To ensure that no $G[j]$ is updated more than once, we introduce a boolean $fixed$ for each index such that
we update $G[j]$ only when it is not fixed and all its predecessors are fixed. With this change, our algorithm becomes \ref{fig:alg-longest2}.

\begin{algorithm}
\SetAlgoRefName{LLP2-Longest-Increasing-Subsequence}
$P_j$: Code for thread $j$\\
 {\bf input}: $A$:array of int;\\
 {\bf var} $G$: array[$1$ \ldots $n$] of int;\\
 \h $fixed$:  array[$1$ \ldots $n$] of boolean;\\
 {\bf init}: $G[j] = 1; fixed[j] := false;$ \\
% \h $fixed[1] := true;$\\
 \h $pre(j) := \{ i \in 1..j-1 | A[i] < A[j] \}$;\\
 {\bf forbidden}: $\neg fixed[j] \wedge (\forall i \in pre(j): fixed[i] )$;\\
\h  {\bf advance}: $G[j] := \max \{G[i] + 1~|~ i \in pre(j)\}$;\\
\h \h \h $fixed[j] := true$;\\
 \caption{Finding the Longest Increasing Subsequence. \label{fig:alg-longest2}}
\end{algorithm}

Let us now analyze the complexity of the algorithm. The sequential complexity is simple because we can maintain the list 
of all vertices that are forbidden because all its predecessors are fixed. Once we have processed a vertex, we never process it again.
This is similar to a sequential algorithm of topological sort.
In this case, we examine a vertex exactly once only after all its predecessors are fixed. The time complexity of this algorithm is
$O(n^2)$.

For the parallel time complexity, assume that we have $n^2$ processors available. Then, in time $O(\log n)$, one can determine whether the
vertex is forbidden and advance it to the correct value if it is forbidden. This is because for every $j$, we simply need to check that all vertices 
in $pre(j)$ are fixed and $j$ is not fixed. By using a parallel {\em reduce} operation, we can check in $O(\log n)$ time whether $j$ is forbidden.
If the longest path in the graph $H$ is $\Delta$, then the
algorithm takes $O(\Delta \log n)$ time. 

Now, let us consider the situation where each thread $j$ writes the value of $fixed[j]$ and $G[j]$ without using any synchronization.
If any thread $j$ reads the old value of $fixed[i]$ for some $i$ in $pre(j)$, it will not update $fixed[j]$ at that point. Eventually, it will read the correct value of $fixed[i]$,
and perform $advance$. We do assume in this version that if a process reads $fixed[i]$ as true, then it reads the correct value of $G[i]$, because $fixed[i]$ is updated
after $G[i]$. 
%(Algorithm \ref{fig:alg-longest} can be used when this assumption does not hold.)
Consequently, we get the following result.
\begin{lemma}
There exists a parallel algorithm for the longest increasing subsequence problem which uses just read-write atomicity and solves the problem in $O(\Delta \log n)$ time.
\end{lemma}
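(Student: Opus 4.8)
The plan is to establish the statement in three parts: correctness of the algorithm in Algorithm~\ref{fig:alg-longest2}, its running time under a synchronous schedule, and its robustness to asynchronous read-write execution. I would dispose of correctness first, leaning on the machinery already in place. Since the preceding lemma shows that $B$ is lattice-linear, the LLP background result guarantees that repeatedly advancing forbidden indices converges to the least vector $G$ satisfying $B$, and by the discussion comparing $B$ with $B_*$ this least vector records at each $G[j]$ the length of the longest increasing subsequence ending at $j$. The only thing left to check is that the $fixed$ bookkeeping does not perturb this fixpoint. I would argue that the guard $\neg fixed[j] \wedge (\forall i \in pre(j): fixed[i])$ fires for a given $j$ at most once (the guard requires $\neg fixed[j]$, and the advance sets $fixed[j] := true$), and that it fires only after every predecessor has reached its final value, so the single assignment $G[j] := \max\{G[i]+1 \mid i \in pre(j)\}$ installs the correct value and is never revisited; sources, for which $pre(j)=\emptyset$, keep the initialized value $G[j]=1$.

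For the time bound I would introduce the \emph{level} of a vertex $j$, defined as the length of a longest increasing subsequence ending at $j$ (equivalently, the final value of $G[j]$), and prove by induction on the level that every vertex of level at most $k$ becomes fixed within $k$ synchronous rounds. The base case handles the level-$1$ vertices, which are exactly the sources; their guard reduces to $\neg fixed[j]$ and they fire in round $1$. The inductive step observes that a level-$k$ vertex has all of its predecessors at level at most $k-1$, hence fixed by round $k-1$ by the hypothesis, so its guard is enabled and it fires in round $k$. Since the deepest level equals $\Delta$, all vertices are fixed after $\Delta$ rounds. Within a round, using the $n^2$ processors, evaluating the guard for each $j$ and computing the subsequent maximum both amount to a reduction over $pre(j)$, which a parallel \emph{reduce} performs in $O(\log n)$ time, yielding the claimed $O(\Delta \log n)$ total.

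The main obstacle, and the part I would treat most carefully, is showing that correctness survives when each thread writes $fixed[j]$ and $G[j]$ without synchronization and may read stale copies of other threads' state. Here I would rely on two facts already noted in the text: each index executes its advance at most once, so $G[j]$ is written exactly once and to its final value; and within that advance $G[i]$ is written before $fixed[i]$. From these I would argue that a stale read of $fixed[i]$ can return false only when the true value is true, which merely postpones $j$'s update and can never trigger a premature or incorrect advance, while reading $fixed[i]$ as true guarantees, by the write ordering, that the concurrent read of $G[i]$ returns the correct final value. A progress argument in level order then shows that each vertex is eventually observed with all predecessors fixed and is advanced exactly once, so the asynchronous execution reaches the same fixpoint as the synchronous one; hence read-write atomicity suffices and the time bound is preserved.
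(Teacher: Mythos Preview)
Your proposal is correct and follows essentially the same approach as the paper: the paper's proof is the discussion immediately preceding the lemma, which argues that each round takes $O(\log n)$ via parallel reduce, that the number of rounds is the longest path length $\Delta$ in $H$, and that under read-write atomicity a stale read of $fixed[i]$ merely delays $j$'s advance while the write order $G[i]$-then-$fixed[i]$ guarantees that a read of $fixed[i]=\mathit{true}$ sees the final $G[i]$. Your explicit level-based induction is a clean elaboration of the paper's one-line ``longest path is $\Delta$'' claim, and your treatment of the asynchronous case is, if anything, more careful than the paper's own.
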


 We now add lattice-linear constraints to the problem. Instead of the longest increasing subsequence, we may be interested in the longest increasing subsequence that satisfies an additional predicate.
 \begin{lemma}
 All the following predicates are lattice linear.
 \begin{enumerate}
 \item
For any $j$, $G[j]$ is greater than or equal to the longest increasing subsequence of odd integers ending at $j$.
\item
$G[j]$ is greater than or equal to the longest increasing subsequence such that $j^{th}$ element in the subsequence exceeds $(j-1)^{th}$ element by at least $k$.
 \end{enumerate}
 \end{lemma}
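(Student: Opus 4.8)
The plan is to reduce both predicates to the template of Lemma~\ref{lem:basic-LLP} in exactly the way the unconstrained longest increasing subsequence was handled. The guiding observation is that each additional constraint changes \emph{only} the definition of the predecessor set $pre(j)$; the predicate itself retains the shape $G[j] \geq \max\{G[i]+1 \mid i \in pre(j)\}$ (together with the base conjunct $G[j]\geq 1$ that covers the case of an empty predecessor set). Since the right-hand side is a monotone function of $G$ no matter how $pre(j)$ is chosen, lattice-linearity will follow from Lemma~\ref{lem:basic-LLP}(a) applied to each fixed $j$, and from Lemma~\ref{lem:basic-LLP}(b) to close the conjunction over all $j$.

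For part (1), I would first define the restricted predecessor set $pre_1(j) = \{\, i \in 1..j-1 \mid A[i] < A[j],\ A[i]\bmod 2 = 1,\ A[j]\bmod 2 = 1 \,\}$, so that edges of the predecessor graph only connect odd-valued entries in increasing order. The predicate to analyze is then the conjunction over $j$ of $G[j] \geq \max\{G[i]+1 \mid i \in pre_1(j)\}$. I would verify that $f_j(G) := \max\{G[i]+1 \mid i \in pre_1(j)\}$ is monotone: if $G \leq H$ componentwise, then each term $G[i]+1$ is dominated by $H[i]+1$, so the maximum cannot decrease and $f_j(G)\leq f_j(H)$. Lemma~\ref{lem:basic-LLP}(a) then gives lattice-linearity of each conjunct $G[j]\geq f_j(G)$, and Lemma~\ref{lem:basic-LLP}(b) (iterated over $j$, and combined with the lattice-linear conjunct $G[j]\geq 1$ coming from the constant monotone function $1$) closes the argument.

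For part (2), the only difference is the predecessor relation. The requirement that a chosen element exceed its immediate predecessor in the subsequence by at least $k$ is encoded by $pre_2(j) = \{\, i \in 1..j-1 \mid A[j] - A[i] \geq k \,\}$ (for $k>0$ this already forces $A[i] < A[j]$, so the sequence remains increasing). The predicate is again a conjunction over $j$ of $G[j] \geq \max\{G[i]+1 \mid i \in pre_2(j)\}$, and the identical monotonicity argument together with Lemma~\ref{lem:basic-LLP} yields lattice-linearity.

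The argument is almost entirely routine once the setup is right; the single place that demands care is the correct \emph{modeling} of each constraint as a predecessor relation. In particular, I would read ``the $j^{th}$ element in the subsequence exceeds the $(j-1)^{th}$'' as a condition on \emph{consecutive chosen} elements, so that the gap constraint is attached to the edges of the predecessor graph rather than to the raw array indices. After the constraint is folded into $pre(j)$, monotonicity of the max-over-predecessors expression is immediate and the two appeals to Lemma~\ref{lem:basic-LLP} finish the proof.
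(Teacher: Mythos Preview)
Your proposal is correct and, for part~(2), it coincides exactly with the paper's argument: the paper also ``redraws'' the predecessor graph so that there is an edge from $v_i$ to $v_j$ iff $i<j$ and $A[i]+k\le A[j]$, after which lattice-linearity follows from the same monotone-max template you invoke.

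For part~(1) the paper takes a slightly more direct route. It reads the right-hand side---``the length of the longest increasing subsequence of odd integers ending at $j$''---as a constant $c_j$ depending only on the input array $A$, not on $G$; then $G[j]<c_j$ immediately makes $j$ forbidden, since any $H\ge G$ with $H[j]=G[j]$ still violates $H[j]\ge c_j$. You instead fold the odd-only constraint into a modified predecessor set $pre_1(j)$ and reuse the recursive $G[j]\ge\max\{G[i]+1\mid i\in pre_1(j)\}$ template together with Lemma~\ref{lem:basic-LLP}. Both arguments are valid; the paper's is a one-line forbidden-state check, while yours is more uniform (the same mechanism handles both parts) and more directly suggests the actual LLP algorithm for the constrained problem.
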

 \begin{proof}
 \begin{enumerate}
 \item 
 Since lattice-linear predicates are closed under conjunction, it is sufficient to focus on a fixed $j$. If $G[j]$ is less than the length of the longest increasing subsequence of odd integers ending at $j$, then the index $j$ is forbidden. Unless $j$ is increased the predicate can never become true.
 \item
 %Similar to the previous item. Alternatively, one can
 We view this predicate as redrawing the directed graph $H$ such that we draw an edge from $v_i$ to $v_j$ if $i$ is less than $j$ and $A[i]+k$ is less than or equal to $A[j]$.
 \end{enumerate}
 \end{proof}
% 
% The following predicates are not lattice linear.
% \begin{enumerate}
% \item
%
%\item
%The longest increasing 
% \end{enumerate}

% {\bf Remark 1}:
% Another way to solve this problem is to look for the largest vector $G$ such that 
% $G[1]$ equals $1$ and all other components satisfy that
% $G[j] \leq \max \{G[i] + 1~|~ i \in pre(j) \}$.
% This algorithm starts with all $G$ components initialized to $n$ except for $G[1]$.

% \begin{algorithm}
% \SetAlgoRefName{LLP3-Longest-Increasing-Subsequence}
%  {\bf input}: $A$:array of int;\\
%  {\bf init}: $G[j] = n ~ \forall j \neq 1$;\\
% \h $G[1] = 1$;\\
%  \h $pre(j) := \{ i \in 1..j-1 | A[i] < A[j] \}$;\\
% {\bf ensure}: $G[j] \leq \max \{G[i] + 1~|~ i \in pre(j)\}$;\\
% \end{algorithm}
%The details of such an algorithm is left as an exercise.

%{\bf Remark 2}:
%A very similar LLP algorithm can be provided for topological sort.

We note here that the problem can also be solved in parallel using repeated squaring of an appropriate matrix. We do not discuss that method here since it is not work-optimal and generally not efficient in practice.

% !TEX root = ../stable-marriage/arxiv-housing/llp-dp.tex
\section{Optimal Binary Search Tree}
 \label{sec:optimalBST}
Suppose that we have a fixed set of $n$ symbols called {\em keys} with some associated information called {\em values}. Our goal is to build a dictionary based
on binary search tree out of these symbols.
The dictionary supports a single operation search which returns the value associated with the the given key.
We are also given the frequency of each symbol as the argument for the search query. 
The cost of any search for a given key is given by the length of the path from the root of the binary search tree to the node containing 
that key. Given any binary search tree, we can compute the total cost of the tree for all searches.
We would like to build the binary search tree with the least cost.

Let the frequency of key $i$ being searched is $p_i$. We assume that the keys are sorted in increasing order of $p_i$.
Our algorithm is based on building progressively bigger binary search trees.
The main idea is as follows.
 Suppose symbol $k$ is the root of an optimal binary search tree for symbols in the range $[i..j]$.
The root $k$ divides the range into three parts -- the range of indices strictly less than $k$, the index $k$, and the range of indices
strictly greater than $k$. The left or the right range may be empty.
Then, the left subtree and the right subtree must themselves be optimal for their respective ranges.
Let $\C[i,j]$  denote the least cost of any binary search tree
built from symbols in the range $i..j$. 
%Clearly, there is a single binary search tree for just one symbol, and its cost is always $1$.
We use the symbol $s(i,j)$ as the sum of all frequencies from the symbol $i$ to $j$, i.e.,
\[ s(i,j) = \sum_{k=i}^{j} p_k \]
For convenience, we let $s(i,j)$ equal to $0$ whenever $i > j$, i.e., the range is empty.

%We create an acyclic graph of index pairs as vertices. There is an edge from a vertex $(i,j')$ to $(i,j)$ if $j' < j$.
We now define a lattice linear constraint on $\C[i,j]$.
Let $i \leq k < j$. Consider the cost of the optimal tree such that symbol $k$ is at the root.
The cost has three components: the cost of the left subtree if any, the cost of the search ending at this node itself and the cost of search in the 
right subtree. The cost of the left subtree is 
%$0$ if $i=k$ and 
\[ \C[i,k-1] + s(i,k-1) \]
whenever $i<k$. The cost of the node itself is $s(k,k)$. The cost of the right subtree is
\[ \C[k+1,j] + s(k+1,j) \]
%\[ (1+\C[k+1,j])*s(k+1,j) \]
Combining these expressions, we get

\[ \C[i,j] = \min_{i \leq k < j} (\C[i,k-1] + s(i,j) + \C[k+1,j]) \]

This is also the least value of $\C[i,j]$ such that 
\[ \C[i,j] \geq \min_{i \leq k < j} (\C[i,k-1] + s(i,j) + \C[k+1,j]) \]

We now show that the above predicate is lattice-linear.
\begin{lemma}
The constraint $B \equiv \forall i,j: \C[i,j] \geq \min_{i \leq k < j} (\C[i,k-1] + s(i,j) + \C[k+1,j])$ is lattice-linear.
\end{lemma}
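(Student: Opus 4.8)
The plan is to reduce the claim to the two building blocks already established in Lemma \ref{lem:basic-LLP}. Since $B$ is written as a conjunction $\forall i,j$ over all index pairs, and Lemma \ref{lem:basic-LLP}(b) tells us that the class of lattice-linear predicates is closed under conjunction, it suffices to fix a single pair $(i,j)$ and prove that the single-pair predicate $\C[i,j] \geq \min_{i \leq k < j}(\C[i,k-1] + s(i,j) + \C[k+1,j])$ is lattice-linear. The full predicate $B$ is then lattice-linear as a (finite) conjunction of such per-pair predicates.

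For a fixed $(i,j)$, I would invoke Lemma \ref{lem:basic-LLP}(a) with the function $f(\C) = \min_{i \leq k < j}(\C[i,k-1] + s(i,j) + \C[k+1,j])$. This casts the single-pair predicate exactly in the form $\C[i,j] \geq f(\C)$ required by part (a), so the only remaining obligation is to verify that $f$ is a monotone function on the lattice $\CC$ of pair-indexed nonnegative vectors. Once monotonicity is in hand, part (a) delivers lattice-linearity of the per-pair predicate immediately.

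To argue monotonicity, observe that $s(i,j)$ is a nonnegative constant independent of $\C$, while each of the two remaining terms $\C[i,k-1]$ and $\C[k+1,j]$ is a single coordinate of $\C$, hence trivially nondecreasing as $\C$ increases componentwise. For each fixed $k$, the expression $\C[i,k-1] + s(i,j) + \C[k+1,j]$ is therefore monotone, being a sum of a constant and two monotone coordinate projections. Crucially, the index set $\{k : i \leq k < j\}$ over which the minimum is taken does not depend on $\C$, so $f$ is a pointwise minimum of finitely many monotone functions, which is again monotone. The boundary terms, where a left or right subrange is empty, are handled by the stated convention that empty ranges contribute cost $0$ and $s(i,j)=0$; these only add nonnegative constants and so do not disturb monotonicity.

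The main obstacle is essentially bookkeeping rather than conceptual: one must confirm that the doubly-indexed quantities $\C[i,k-1]$ and $\C[k+1,j]$ are genuine components of the vector $\C$, so that the one-dimensional formulation of Lemma \ref{lem:basic-LLP}(a) applies after a routine relabeling of the pair index $(i,j)$ into a single coordinate, and that the empty-range base cases introduce no non-monotone dependence. Beyond that relabeling, the proof is a direct composition of the two parts of Lemma \ref{lem:basic-LLP}, with the entire analytic content residing in the elementary fact that sums and finite minima preserve monotonicity.
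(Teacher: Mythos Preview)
Your proof is correct, but it takes a different route from the paper's own argument. The paper proves this lemma directly from the definition of a forbidden index: it assumes $B$ is false, picks a violating pair $(i,j)$ with $G[i,j] < \min_{i\le k<j}(G[i,k-1]+s(i,j)+G[k+1,j])$, and argues that for any $H\ge G$ with $H[i,j]=G[i,j]$ the right-hand side can only grow (since each $H[i,k-1]$ and $H[k+1,j]$ is at least the corresponding $G$-entry), so the inequality at $(i,j)$ remains violated and $(i,j)$ is forbidden. In contrast, you factor the claim through Lemma~\ref{lem:basic-LLP}: closure under conjunction reduces to a single pair, and then part~(a) applies once you observe that the right-hand side is a finite minimum of sums of coordinates and constants, hence monotone.

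Both arguments are short and essentially equivalent in content; the monotonicity you verify is exactly what drives the paper's ``the right-hand side can only increase'' step. Your approach has the advantage of being uniform with the proof style the paper itself uses for the longest-increasing-subsequence lemma, and it makes explicit why the $\min$ causes no trouble (minimum of monotone functions over a $G$-independent index set is monotone). The paper's direct approach is slightly more self-contained, since it does not need to invoke the fact that $\min$ preserves monotonicity or relabel the double index into a single coordinate. Your remark about the empty-range conventions is a nice piece of hygiene that the paper leaves implicit.
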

\begin{proof}
Suppose that $B$ is false, i.e., $\exists i,j: \C[i,j]  < \min_{i \leq k < j} (\C[i,k-1] + s(i,j) + \C[k+1,j]).$
This means that there exists $i,j,k$ with $i \leq k < j$ such that  $\C[i,j]  <  (\C[i,k-1] + s(i,j) + \C[k+1,j])$.
This means the the index $(i,j)$ is forbidden and unless $\C[i,j]$ is increased, the predicate $B$ can never become true
irrespective of how other components of $\C$ are increased.
\end{proof}

We now have our LLP-based algorithm for Optimal Binary Search Tree as Algorithm \ref{algo:LLP-BST}.
The program has a single variable $G$. It is initialized so that $G[i,i]$ equals $p[i]$ and $G[i,j]$ equals zero whenever $i$ is not equal to $j$. The algorithm advances $G[i,j]$ whenever it is smaller than
$\min_{i \leq k < j} \C[i,k-1] + s(i,j) + \C[k+1,j]$. 
In Algorithm \ref{algo:LLP-BST}, we have used the {\bf awlways} clause as a macro that uses
$s(i,j)$ as a short form for $ \sum_{k=i}^{j} p[k]$.

\begin{algorithm}
\SetAlgoRefName{LLP-OptimalBinarySearchTree}
$P_{i,j}$: Code for thread $(i,j)$\\
 {\bf input}: $p$:array of real;// frequency of each symbol\\
 {\bf init}: $G[i,j] = 0 ~ \forall i \neq j$;\\
\h $G[i,i] = p[i]$;\\
{\bf always}: 
   $s(i,j) = \sum_{k=i}^{j} p[k]$\\
{\bf ensure}:\\
%\h  $ \C[i,j] \leq  \min_{i \leq k < j} (1+\C[i,k-1])*s(i,k-1)/s(i,j)) + G[k,k] +  (1+\C[k+1,j])*s(k+1,j)/(s(i,j))$\\
\h  $ \C[i,j] \geq  \min_{i \leq k < j} \C[i,k-1] + s(i,j) + \C[k+1,j]$\\
{\bf priority}: $(j-i)$\\
\caption{Finding An Optimal Binary Search Tree \label{algo:LLP-BST}}
\end{algorithm}

% The following algorithm is the dual
%\begin{algorithm}
%\SetAlgoRefName{LLP-OptimalBinarySearchTree}
% {\bf input}: $p$:array of real;// frequency of each symbol\\
% {\bf init}: $G[i,j] = \infty ~ \forall i \neq j$;\\
%\h $G[i,i] = p[i]$;\\
%{\bf always}: 
%   $s(i,j) = \sum_{k=i}^{k=j} p[i]$\\
%{\bf ensure$(i,j)$}:\\
%%\h  $ \C[i,j] \leq  \min_{i \leq k < j} (1+\C[i,k-1])*s(i,k-1)/s(i,j)) + G[k,k] +  (1+\C[k+1,j])*s(k+1,j)/(s(i,j))$\\
%\h  $ \C[i,j] \leq  \min_{i \leq k < j} (\C[i,k-1] + s(i,j) + \C[k+1,j]$\\
%{\bf priority}: $(j-i)$\\
%\caption{Finding An Optimal Binary Search Tree \label{algo:LLP-BST}}
%\end{algorithm}

Although, the above algorithm will give us correct answers, it is not efficient as it may update $\C[i,j]$ before $\C[i,k]$ and $\C[k,j]$ for $i\leq k < j$ have stabilized.
However, the following scheduling strategy ensures that we update $\C[i,j]$ at most once. We check for whether $\C[i,j]$ is forbidden in the order
of $j-i$. Hence, initially all $\C[i,j]$ such that $j=i+1$ are updated. This is followed by all $\C[i,j]$ such that $j=i+2$, and so on.
We capture this scheduling strategy with the ${\bf priority}$ statement. We pick $G[i,j]$ to update such that
$(j-i)$ have minimal values.
Of course, our goal is to compute $\C[1,n]$.
With the above strategy of updating $\C[i,j]$, we get that $\C[i,j]$ is updated at most once. Since there are $O(n^2)$ possible values of $\C[i,j]$ and each takes
$O(n)$ work to update, we get the work complexity of $O(n^3)$. On a CREW PRAM, we can compute all $i,j$ with the fixed difference in parallel. By using
$O(\log n)$ span algorithm to compute $\min$, we get the parallel time complexity as $O(n \log n)$.
Thus, we have the following result.
\begin{lemma}
There exists a parallel algorithm for the optimal binary search tree problem which uses just read-write atomicity and solves it in $O(n \log n)$ parallel time.
\end{lemma}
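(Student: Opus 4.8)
The plan is to argue three things in turn: that Algorithm LLP-OptimalBinarySearchTree converges to the vector of optimal binary-search-tree costs, that the \textbf{priority} scheduling forces each entry $\C[i,j]$ to be written at most once, and that the resulting schedule executes in $O(n\log n)$ parallel time using only read-write atomicity. The correctness and the complexity are largely separate, so I would keep them in distinct stages and combine them only at the end.

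First I would establish correctness. By the preceding lemma the predicate $B \equiv \forall i,j: \C[i,j] \geq \min_{i \leq k < j}(\C[i,k-1] + s(i,j) + \C[k+1,j])$ is lattice-linear, so (reading the \textbf{ensure} clause as the complement of a forbidden predicate) the LLP algorithm returns the least vector $\C \le T$ satisfying $B$. It then remains to identify this least fixed point with the true optimal costs $C^{*}[i,j]$. Since the optimal-substructure property makes $C^{*}$ satisfy the recurrence with equality, $C^{*}$ is feasible, so the returned vector is componentwise at most $C^{*}$. For the reverse inequality I would induct on the range size $d=j-i$: the diagonal $\C[i,i]=p[i]$ is fixed by initialization (matching the single-key cost), and for $d>0$ the constraint $B$ together with the induction hypothesis on smaller ranges forces $\C[i,j]\ge \min_{k}(C^{*}[i,k-1]+s(i,j)+C^{*}[k+1,j]) = C^{*}[i,j]$. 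Combining the two inequalities gives equality, so the algorithm computes exactly the optimal costs.

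Next I would analyze the scheduling. The key observation is a dependency bound: for every $k$ with $i \le k < j$, the terms $\C[i,k-1]$ and $\C[k+1,j]$ refer to ranges whose sizes $(k-1)-i$ and $j-(k+1)$ are both strictly less than $d=j-i$. Since the \textbf{priority} statement advances entries in nondecreasing order of $j-i$, every value on which $\C[i,j]$ depends has already reached its final value before $(i,j)$ is processed; hence the first advance of $\C[i,j]$ already writes the correct value and no entry is ever updated more than once. I would dispatch the boundary conventions $\C[i,i-1]=0$ and $s(i,j)=0$ for $i>j$ here so the recurrence is well defined at the ends of each range.

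Finally I would assemble the complexity and the atomicity claims. The difference $d$ takes $n$ distinct values, processed in $n$ rounds; within round $d$ all $O(n)$ entries with $j-i=d$ are advanced concurrently, and each advance is a $\min$ over at most $n$ candidates, computable by a parallel reduction in $O(\log n)$ span on a CREW PRAM, giving $O(n\log n)$ parallel time overall. For read-write atomicity I would invoke the general LLP property that a stale read only delays, never corrupts, an advance: a thread that reads an out-of-date $\C[i,k-1]$ or $\C[k+1,j]$ merely computes a value no larger than the optimum and re-advances once the fresh value becomes visible, so no higher-level synchronization is needed. I expect the main obstacle to be exactly this last point --- giving a clean argument that the priority-ordered schedule, combined with monotone but possibly stale reads, still terminates at the least fixed point rather than only under an idealized synchronous model.
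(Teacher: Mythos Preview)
Your proposal is correct and follows essentially the same approach as the paper: schedule by the difference $d=j-i$, observe that each $\C[i,j]$ depends only on entries with strictly smaller $d$ and is therefore written once, and bound the span by $n$ rounds times $O(\log n)$ for the parallel $\min$. The paper's argument is in fact just the paragraph preceding the lemma and is considerably terser than yours --- it does not spell out the correctness induction or the interplay between the priority schedule and stale reads that you flag as the main obstacle, so your version is more complete rather than different in kind.
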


We now consider the constrained versions of the problem.
\begin{lemma}
All the following predicates are lattice linear.
\begin{enumerate}
    \item Key $x$ is not a parent for any key.
    \item The difference in the sizes of the left subtree and the right subtree is at most $1$.
\end{enumerate}
\end{lemma}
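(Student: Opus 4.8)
The plan is to prove both parts in a single unified way, exploiting the fact that each structural property can be imposed simply by \emph{restricting the set of candidate roots} in the minimization that defines $\C[i,j]$, and that such a restriction leaves the right-hand side a monotone function of $\C$. Concretely, for each range $(i,j)$ I would choose a subset $S(i,j)$ of the admissible roots $\{k : i \le k < j\}$ and consider the predicate
\[ B' \equiv \forall i,j : \C[i,j] \ge \min_{k \in S(i,j)} \bigl( \C[i,k-1] + s(i,j) + \C[k+1,j] \bigr). \]
Each summand $\C[i,k-1] + s(i,j) + \C[k+1,j]$ is monotone in $\C$ (the term $s(i,j)$ being a constant), and a minimum of finitely many monotone functions over a fixed index set is again monotone; hence for every fixed $(i,j)$ the predicate $\C[i,j] \ge \min_{k \in S(i,j)}(\cdots)$ is lattice-linear by Lemma~\ref{lem:basic-LLP}(a). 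Taking the conjunction over all $(i,j)$ and invoking Lemma~\ref{lem:basic-LLP}(b) then shows $B'$ is lattice-linear, and its least solution is the optimal tree among those meeting the constraint, exactly as in the unconstrained case. It therefore remains only to exhibit, for each part, the correct root set $S(i,j)$.

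For part~(1), the requirement that key $x$ be a parent of no key is precisely the requirement that $x$ be a leaf, i.e.\ that $x$ never be selected as the root of a range containing more than one key. Accordingly I would set $S(i,j) = \{k : i \le k < j\} \setminus \{x\}$ for every range with $j > i$, while leaving singleton ranges untouched so that $x$ may still appear as a leaf. With this choice the monotonicity argument above applies verbatim, and the least feasible vector is the cheapest tree in which $x$ is a leaf.

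For part~(2), if $k$ is the root of $[i,j]$ then the left subtree spans $[i,k-1]$ and the right subtree spans $[k+1,j]$, of sizes $k-i$ and $j-k$; their difference is $|2k - i - j|$. The balance condition is thus $|2k-i-j| \le 1$, so I would take $S(i,j) = \{k : i \le k < j,\ |2k - i - j| \le 1\}$, restricting $k$ to the one or two positions nearest the midpoint $(i+j)/2$. Again the right-hand side is a minimum of monotone functions over a fixed set, so lattice-linearity follows as above. I would read the balance condition as applying at every range $(i,j)$, which is the natural recursive reading and is exactly what the componentwise structure of the lattice enforces.

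The step I expect to require the most care is the \emph{translation} of each structural property into a root-set restriction, together with the verification that $S(i,j)$ is nonempty so that a feasible constrained tree exists (otherwise the minimum is vacuously $+\infty$ and no least feasible vector exists). For part~(2) this is immediate, since $S(i,j)$ always contains a root adjacent to the midpoint. For part~(1) one must check the boundary behaviour of the index convention, so that deleting $x$ never empties the admissible set for a genuine range of size at least two; this is where I would be most careful. Once these translations are fixed, the lattice-linearity itself is a direct consequence of monotonicity and Lemma~\ref{lem:basic-LLP}, requiring no further computation.
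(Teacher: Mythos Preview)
Your approach is essentially the same as the paper's: both parts are handled by restricting the admissible root set in the minimization and observing that the right-hand side remains a monotone function of $\C$, so lattice-linearity follows from Lemma~\ref{lem:basic-LLP}. Your write-up is more explicit (unifying the two parts via a general $S(i,j)$ and invoking parts (a) and (b) of the lemma separately), but the underlying idea and the specific root restrictions you give match the paper's proof; your extra discussion of nonemptiness of $S(i,j)$ concerns existence of a feasible vector rather than lattice-linearity itself, which holds regardless.
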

\begin{proof}
\begin{enumerate}
    \item This requirement changes the ensure predicate to
     $ \C[i,j] \geq  \min_{i \leq k < j, k \neq x} \C[i,k-1] + s(i,j) + \C[k+1,j]$.
     The right hand side of the constraint continues to be monotonic and therefore it is lattice linear.
     \item This requirement changes the ensure predicate to
     $ \C[i,j] \geq  \min_{i \leq k < j, |k-1-i, j-k-1| \leq 1} \C[i,k-1] + s(i,j) + \C[k+1,j]$.
     This change simply restricts the values of $k$, and the right hand side continues to be monotonic.
\end{enumerate}
\end{proof}

{\em Remark:}
A problem very similar to the optimal Binary Search tree problem is that of constructing an optimal way of multiplying a chain of matrices. Since matrix multiplication is associative,
the product of matrices $(M_1*M_2)*M_3$ is equal to $M_1*(M_2*M_3)$. However, depending upon the dimensions of the matrices, the computational effort may be different.
% For example, consider the dimensions (30 times 10), (10 times 30), (30 times 2) results in a matrix of size 30 times 2. 
% If we multiply the first two matrices using the standard simple matrix multiplication, we get 30 times 10 times 30 = 9000 operations to get 30 times 30 matrix. Multiplying $M_3$ requires
% 1800 additional operations. If we multiply $M_2$ and $M_3$ first, then we require 600 operations first. Multiplying $M_1$ adds $30 \times 10 \times 2 = 600$ additional operations.
We let the dimension of matrix $M_i$ be $m_{i-1} \times m_i$. Note that this keep the matrix product well-defined because the dimension of matrix $M_{i+1}$ would be 
$m_i \times m_{i+1}$ and the product $M_i \times M_{i+1}$ is well-defined. 
We can view any evaluation of a chain as a binary tree where the intermediate notes are the multiplication operation and the leaves are the matrices themselves.
Suppose, our goal is to compute the optimal binary tree for multiplying matrices in the range $M_i \ldots M_j$.
Borrowing ideas from the previous section, we let $\C[i,j]$ denote the optimal cost of computing the product of matrices in the range $M_i \ldots M_j$.
Suppose that this product is broken into products of $M_i \ldots M_k$ and $M_{k+1} \ldots M_j$ and then multiplication of these two matrices.
We can compute the cost of this tree as 
\[  \C[i,k] + \C[k+1,j] + m_{i-1}m_{k}m_j \]
Then, we have the following predicate on $\C$.
\[ \C[i,j] \geq \min_{i \leq k < j} (\C[i,k] +  m_{i-1}m_{k}m_j  + \C[k+1,j] \]

The reader will notice the similarity with the optimal binary search tree problem and this problem and the same algorithm can be adapted to solve this problem.

% !TEX root = ../stable-marriage/arxiv-housing/llp-dp.tex
\section{Knapsack Problem}
 \label{sec:knapsack} 
 
We are given $n$ items with weights $w_1, w_2, \ldots, w_n$ and values $v_1, v_2, \ldots, v_n$. We are also given a knapsack
that has a capacity of $W$. Our goal is to determine the subset of items that can be carried in the knapsack and that maximizes the total value.
The standard dynamic programming solution is based on memoization of the following dynamic programming formulation \cite{Vazirani:2001, williamson}.
Let $G[i, w]$ be the maximum value that can be obtained by picking items from $1..i$ with the capacity constraint of $w$.
Then, $G[i,w]= max (G[i-1, w-w_i] + v_i, G[i-1, w])$. The first argument of the max function corresponds to the case when the item $i$ is included in the optimal set
from $1..i$, and the second argument corresponds to the case when the item $i$ is not included and hence the entire capacity can be used for the items from $1..i-1$.
If $w_i > w$, then the item $i$ can never be in the knapsack and can be skipped.
The base cases are simple. The value of $G[0,w]$ and $G[i,0]$ is zero for all $w$ and $i$.
Our goal is to find $G[n,W]$. By filling up the two dimensional array $G$ for all values of $0 \leq i \leq n$ and $0 \leq w \leq W$, we get an algorithm
with time complexity $O(nW)$.

We can model this problem using lattice-linear predicates as follows. 
%The underlying space is the distributive lattice of all matrices of size $n \times S$.
%The maximum possible profit is $\infty$; so we start with $G[i,w]$ as $\infty$.
We model the feasibility as 
$G[i,w] \geq \max (G[i-1, w-w_i] + v_i, G[i-1, w])$ for all $i,w > 0$ and $w_i \leq w$.
Also, 
$G[i,w] =0$ if $i=0$ or $w=0$. Our goal is to find the minimum vector $G$ that satisfies feasibility.
\begin{lemma}
The constraint $B \equiv \forall i,w: G[i,w] \geq \max (G[i-1, w-w_i] + v_i, G[i-1, w])$ for $w_i \leq w$ is lattice-linear.
\end{lemma}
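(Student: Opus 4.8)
The plan is to reduce the statement to the two parts of Lemma \ref{lem:basic-LLP}, exactly as was done for the longest-increasing-subsequence and optimal-binary-search-tree constraints. Here the lattice is the set of knapsack tables, with one coordinate of the vector $G$ per cell $G[i,w]$, and the predicate $B$ is a conjunction indexed by the admissible pairs $(i,w)$ (those with $i,w>0$ and $w_i \le w$). Since Lemma \ref{lem:basic-LLP}(b) guarantees that lattice-linearity is preserved under conjunction, it suffices to fix one pair $(i,w)$ and show that the single conjunct $G[i,w] \ge f(G)$ is lattice-linear, where
\[ f(G) = \max\bigl(G[i-1,\,w-w_i] + v_i,\; G[i-1,\,w]\bigr). \]

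The key step is to verify that $f$ is a monotone map from the lattice into $\RR_{\ge 0}$, so that Lemma \ref{lem:basic-LLP}(a) applies to the fixed index $(i,w)$. First I would note that $G \mapsto G[i-1,\,w-w_i]$ and $G \mapsto G[i-1,\,w]$ are coordinate projections and hence monotone; adding the nonnegative constant $v_i$ preserves monotonicity, and the pointwise maximum of two monotone functions is again monotone, so $f$ is monotone. Invoking Lemma \ref{lem:basic-LLP}(a) then yields lattice-linearity of the single conjunct, and Lemma \ref{lem:basic-LLP}(b), applied across all admissible $(i,w)$, finishes the proof. As an equally direct alternative mirroring the optimal-binary-search-tree lemma, I could argue about forbidden states: if $\neg B(G)$ then some $(i,w)$ has $G[i,w] < f(G)$, and for any $H \ge G$ with $H[i,w]=G[i,w]$ monotonicity gives $H[i,w] = G[i,w] < f(G) \le f(H)$, so $H$ violates the constraint at $(i,w)$ and $(i,w)$ is therefore forbidden.

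I do not expect a genuine obstacle, since the predicate has the same shape as the earlier ones already handled. The only point requiring care is confirming monotonicity of the $\max$ expression, which is immediate once one observes that the two relevant entries come from row $i-1$, enter with order-preserving dependence on $G$, and are shifted only by the nonnegative constant $v_i$. The lone bookkeeping subtlety is the domain of the conjunction: restricting to $w_i \le w$ and excluding the base cases $i=0$ or $w=0$ (where $G$ is pinned to $0$) merely selects which indices participate in $B$ and does not affect the monotonicity of any individual conjunct.
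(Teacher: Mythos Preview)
Your proof is correct. The paper's own argument is the one-line forbidden-state observation you give as your alternative: if $B$ fails at some $(i,w)$ then $G[i,w]$ is forbidden because the right-hand side can only grow as the other entries increase. Your primary route via Lemma~\ref{lem:basic-LLP}(a)--(b), checking monotonicity of $f(G)=\max(G[i-1,w-w_i]+v_i,\,G[i-1,w])$ and then closing under conjunction, is a slightly more structured packaging of the same idea (and matches how the paper handled the longest-increasing-subsequence constraint rather than the knapsack one), but the content is identical.
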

\begin{proof}
If the predicate $B$ is false, there exists $i$ and $w$ such that $G[i,w] < \max (G[i-1, w-w_i] + v_i, G[i-1, w])$. The value $G[i,w]$ is forbidden; unless
$G[i,w]$ is increased the predicate can never become true.
\end{proof}

\begin{algorithm}
\SetAlgoRefName{LLP-Knapsack}
$P_{i,j}$: Code for thread $(i,j)$\\
 {\bf input}: $w,v$:array[$1$..$n$] of int;// weight and value of each item\\
 {\bf var}: $G$:array[$0 \ldots n$, $0 \ldots W$] of int; \\

 {\bf init}: $G[i,j] = 0 ~ if ~ (i=0) \vee (j=0)$;\\
%{\bf always}: 
%   $s(i,j) = \sum_{k=i}^{k=j} p[i]$\\
{\bf ensure}:\\
$ G[i,j] \geq  \max \{ G[i -1, j -w_i] + v_i,  G[i-1, j] \} $ ~ if ~$j \geq w_i$\\
\h   \h $ \geq G[i-1, j]$, otherwise.\\
\caption{Finding An Optimal Solution to the Knapsack Problem\label{algo:LLP-Knapsack}}
\end{algorithm}

Algorithm \ref{algo:LLP-Knapsack} updates the value of $G[i,j]$ based only on the values of $G[i-1,.]$. Furthermore, $G[i,j]$ is always at least 
$G[i-1, j]$. Based on this observation, we can simplify the algorithm as follows.
We consider the problem of adding just one item to the knapsack given the constraint that the total weight does not exceed $W$.
We maintain the list of all optimal configurations for each weight less than $W$.

\begin{algorithm}
\SetAlgoRefName{LLP-IncrKnapsack2}
$P_{j}$: Code for thread $j$\\
 {\bf input}: $w,v$: int;// weight and value of the next item\\
 \h $C$: array[$0 \ldots W$] of int;\\
 {\bf var}: $G$:array[$0 \ldots W$] of int; \\
 {\bf init}: $\forall j: G[j] = C[j]$;\\
{\bf ensure}:\\
\h  $ G[j] \geq  C[j-w] + v $ ~ if ~$j \geq w$ \\ 
\caption{Finding An Optimal Solution to the Incremental Knapsack Problem\label{algo:LLP-IncrKnapsack}}
\end{algorithm}

The incremental algorithm can be implemented in $O(1)$ parallel time using $O(W)$ processors as shown in Fig. \ref{algo:LLP-IncrKnapsack}.
Each processor $j$ can check whether $G[j]$ needs to be advanced.

We can now invoke the incremental Knapsack algorithm as Algorithm \ref{algo:Knapsack2}.
If we had $W$ cores, then computing $G[i,.]$ from $G[i-1,.]$ can be done in $O(1)$ giving us the span of $O(n)$.
\begin{algorithm}
\SetAlgoRefName{Knapsack2}
$P_{j}$: Code for thread $j$\\
{\bf input}: $w,v$:array[$1$..$n$] of int;// weight and value of each item\\
{\bf var}: $G$:array[$0 \ldots W$] of int;\\
%//only nondominated entries kept\\
{\bf init}: $\forall j: G[j] = 0$;\\
{\bf for} $i := 1$ to $n$ do\\
\h     $G := IncrKnapsack2(w[i], v[i], G)$;\\
%\h remove nondominated entries from $G$;\\
%{\bf output}: $max_j G[j]$
\caption{Finding An Optimal Solution to the Knapsack Problem\label{algo:Knapsack2}}
\end{algorithm}

% Let $V$ be the sum of profit of all the items. It is clear that the Knapsack can never return profit more than $V$.
% Observe that since $G$ contains only nondominated entries, the number of tuples in $G$ is at most $min(W,V)$.
% The algorithm presented is pseudo-polynomial because the time complexity is proportional to $min(W,V)$ rather than the size of the 
% representation of $W$ and $V$.

We now add some lattice-linear constraints to the Knapsack problem. In many applications, some items may be related and the constraint
$x_a \Rightarrow x_b$ means that if the item $x_a$ is included in the Knapsack then the item $x_b$ must also be included. Thus, the item $x_a$ has profit of zero if $x_b$ is not included.
The item $x_b$ has utility even without $x_a$ but not vice-versa.
Without loss of generality, we assume that all weights are strictly positive, and
that index $b < a$.
In the following Lemma, we use an auxiliary variable $S[i,j]$ that keeps the 
set of items included in $G[i,j]$ and not just the profit from those items.

\begin{lemma}
First assume that $(i \neq a)$.
Let $B(i,w) \equiv G[i,w] \geq \max (G[i-1, w - w_i] + v_i, G[i-1, w])$ for $(w_a \leq w)$ and $G[i,w] \geq G[i-1, w]$, otherwise.
This predicate corresponds to any item $i$ different from $a$. The value with a bag of capacity $w$ is always greater than or equal to the choice of picking the item or not picking the item.

Let $B(a,w) \equiv G[a,w] \geq \max (G[a-1, w-w_a] + v_a, G[a-1, w])$ if $b \in S[a-1, w-w_a] \wedge (w_a \leq w)$ and $G[a,w] \geq G[a-1,w]$, otherwise.

Then, $B(i,w)$ is lattice-linear for all $i$ and $w$.
\end{lemma}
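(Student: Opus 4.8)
The plan is to follow the same template as the earlier lattice-linearity lemmas in this paper. By Lemma~\ref{lem:basic-LLP}(b), a conjunction of lattice-linear predicates is lattice-linear, and the global constraint is $\bigwedge_{i,w} B(i,w)$, so it suffices to establish lattice-linearity of $B(i,w)$ for each fixed pair $(i,w)$. I would then split into the two cases declared in the statement, $i \neq a$ and $i = a$, and in each case exhibit a forbidden index whenever the predicate fails, exactly as in the unconstrained Knapsack lemma.

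For $i \neq a$ the argument is essentially verbatim the unconstrained case: the right-hand side $\max(G[i-1,w-w_i]+v_i,\,G[i-1,w])$ (or $G[i-1,w]$ when $w < w_i$) is a monotone function of $G$, being a maximum of sums of coordinates of $G$ with nonnegative constants added. Hence by Lemma~\ref{lem:basic-LLP}(a) the predicate $G[i,w] \geq f(G)$ is lattice-linear, and concretely, if it is false then $(i,w)$ is forbidden: any $H \geq G$ with $H[i,w]=G[i,w]$ satisfies $H[i,w]=G[i,w] < f(G) \leq f(H)$, so $\neg B(H)$.

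The only real work is the case $i = a$, where the right-hand side is piecewise, selected by the membership test $b \in S[a-1,w-w_a]$. First I would observe that each branch is individually a monotone function of $G$: both the ``true'' branch $\max(G[a-1,w-w_a]+v_a,\,G[a-1,w])$ and the ``false'' branch $G[a-1,w]$ are monotone, and moreover the true branch dominates the false branch since the inner $\max$ is always at least $G[a-1,w]$. It therefore remains to control the branch selector, and I would argue that at any $G$ where $B(a,w)$ fails the index $(a,w)$ is still forbidden: for every $H \geq G$ with $H[a,w]=G[a,w]$ the selected right-hand side at $H$ is at least the selected right-hand side at $G$, so the failing inequality persists and $\neg B(H)$.

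The hard part will be precisely this monotonicity of the branch selector, because $S[a-1,w-w_a]$ is an auxiliary set variable pinned to whichever option realizes $G[a-1,w-w_a]$, and a priori the argmax, and hence the membership of $b$, could change as $G$ advances. My plan to close this gap is to lean on the domination of the true branch over the false branch: switching the selector from false to true can only raise the right-hand side, which is harmless for forbiddenness, so the only dangerous transition is true-to-false. I would then show that the intended maintenance of $S$ rules this out while $H[a,w]$ is held fixed at $G[a,w]$ --- intuitively, once $b$ is available in the capacity-$(w-w_a)$ configuration among items $1..a-1$, enlarging other coordinates cannot remove it. Granting this membership monotonicity, the forbidden-index argument goes through unchanged, lattice-linearity of $B(a,w)$ follows, and the two cases together complete the proof.
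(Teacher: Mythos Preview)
Your approach is the paper's approach: the paper's entire proof is the single sentence ``Suppose that $B(i,w)$ is false for some $i$ and $w$. Unless $G[i,w]$ is increased, it can never become true,'' i.e.\ exactly the forbidden-index template you invoke, with no case split and no discussion of the selector $b \in S[a-1,w-w_a]$.

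You have in fact been more careful than the source. The subtlety you isolate---that the piecewise right-hand side in the $i=a$ case is governed by an auxiliary set $S[a-1,w-w_a]$, and that a true-to-false flip of the membership test could in principle lower the right-hand side and rescue the predicate without advancing $G[a,w]$---is real, and the paper does not address it. Your observation that the true branch dominates the false branch (so only the true-to-false transition is dangerous) is the correct diagnosis, and your proposed resolution via monotonicity of $b$-membership is the natural one. What remains unproved, both in your sketch and in the paper, is precisely that monotonicity: the paper introduces $S$ informally (``keeps the set of items included in $G[i,j]$'') and never specifies whether $S$ is a fixed function of the problem instance or evolves with $G$, so the claim that membership cannot be lost as $G$ advances is asserted rather than argued. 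If $S[a-1,w-w_a]$ is read as the optimal item set for the subproblem $(a-1,w-w_a)$---a constant of the input, independent of the current $G$---then the selector is fixed and lattice-linearity follows immediately from Lemma~\ref{lem:basic-LLP}(a); that is almost certainly the intended reading, and making it explicit would close the gap cleanly.
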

\begin{proof}
Suppose that $B(i,w)$ is false for some $i$ and $w$. Unless $G[i,w]$ is increased, it can never become true.
\end{proof}

% We can get $\epsilon$-approximation easily by scaling the algorithm as follows \cite{williamson}.
% We let $M = max_i v_i$. This is the maximum profit any item can give. We define the scaling factor $\mu = \epsilon M / n$.
% We now scale all the values by a factor of $\mu$, i.e.,
% $v_i'  = \lfloor v_i/\mu \rfloor$ for all $i \in [n]$. It is well-known \cite{williamson} that with this scaling, the same algorithm gives a solution such that the profit is
% at least $(1 - \epsilon)OPT$ where OPT is the optimal value.

\section{Conclusions}
In this paper, we have shown that many dynamic programming problems can be solved using a single {\em parallel} Lattice-Linear Predicate algorithm.
In particular, LLP algorithm solves the problem of the longest  increasing subsequence, the optimal binary search tree and the knapsack problem. In addition, it solves the constrained versions
of these problems. The parallel algorithms described in the paper works correctly with read-write atomicity of variables without any use of {\em locks}.
% We also note here that one can compute a concise representation of all the solutions using the notion of slicing \cite{Birk3,davey,mittal2001computation} for
%any predicate that satisfies lattice-linearity as well as it dual.

\bibliography{fmaster}
\end{document}